\newcommand{\PreserveBackslash}[1]{\let\temp=\\#1\let\\=\temp}
\newcolumntype{C}[1]{>{\PreserveBackslash\centering}p{#1}}
\newcolumntype{R}[1]{>{\PreserveBackslash\raggedleft}p{#1}}
\newcolumntype{L}[1]{>{\PreserveBackslash\raggedright}p{#1}}
\DeclareMathOperator*{\argmin}{arg\,min}
\newtheorem{theorem}{\bf Theorem}
\newtheorem{lemma}[theorem]{\bf Lemma}
\newtheorem{definition}{\bf Definition}
\newtheorem*{problemltpm}{\bf LPRC-LTPM Problem}
\newtheorem*{problemotpm}{\bf LPRC-OTPM Problem}
\newtheorem*{problemotcm}{\bf LPRC-OTCM Problem}
\newenvironment{remark}[1][Remark]{\begin{trivlist}
\item[\hskip \labelsep {\bfseries #1}]}{\end{trivlist}}
\begin{document}

\title{LEPA: Incentivizing Long-term Privacy-preserving Data Aggregation in Crowdsensing}

\author{
	\alignauthor
	Zhikun Zhang$^{\dagger}$,
	Shibo He$^{\dagger}$,
	Mengyuan Zhang$^{\dagger}$,
	Jiming Chen$^{\dagger}$, \\
	\affaddr{$^\dagger$ State Key Laboratory of Industrial Control Technology, Zhejiang University, China} \\
    \affaddr{\{zhangzhk,~s18he,~zhang418,~cjm\}@zju.edu.cn}
}

\maketitle \thispagestyle{empty}

\begin{abstract}
In this paper, we study the incentive mechanism design for real-time data aggregation, which holds a large spectrum of crowdsensing applications. Despite extensive studies on static incentive mechanisms, none of these are applicable to real-time data aggregation due to their incapability of maintaining PUs' long-term participation. We emphasize that, to maintain PUs' long-term participation, it is of significant importance to protect their privacy as well as to provide them a desirable \emph{cumulative compensation}. Thus motivated, in this paper, we propose LEPA, an efficient incentive mechanism to stimulate long-term participation in real-time data aggregation. Specifically, we allow PUs to preserve their privacy by reporting noisy data, the impact of which on the aggregation accuracy is quantified with proper privacy and accuracy measures. Then, we provide a framework that jointly optimizes the incentive schemes in different time slots to ensure desirable cumulative compensation for PUs and thereby prevent PUs from leaving the system halfway. Considering PUs' strategic behaviors and combinatorial nature of the sensing tasks, we propose a computationally efficient on-line auction with close-to-optimal performance in presence of NP-hardness of winner user selection. We further show that the proposed on-line auction satisfies desirable properties of truthfulness and individual rationality. The performance of LEPA is validated by both theoretical analysis and extensive simulations.

\end{abstract}

\keywords{Crowd sensing, data aggregation, privacy preservation, incentive mechanism}


\section{introduction}
\subsection{Motivation}
The pervasive use of mobile devices (e.g., smartphones, smartwatchs, and tablet computers, etc.) embedded with multiple sensors (e.g., GPS, gyroscope and microphone, etc.) has given rise to a new sensing paradigm known as crowdsensing \cite{ganti2011mobile,he2017near,duan2017distributed}. Due to its low deployment cost and high sensing coverage, crowdsensing has spurred a wide range of applications including smart transportation, environmental monitoring and spectrum sensing, etc \cite{ganti2010greengps,cheng2014aircloud,jin2016dpsense}.



Real-time data aggregation holds a wild spectrum of crowdsensing applications, where Fusion Center (FC) continuously collects sensing data from Participatory Users (PUs) to provide real-time services. For instance, Waze \cite{waze}, a navigation app on smartphones that utilizes Waze users' (PUs) reported traffic data to estimate the real-time traffic condition. As shown in Figure \ref{fig:waze}, Waze users continuously report their speed and location information, as well as real-time traffic jams or accidents to Waze server (FC) through Waze app. Based on the information collected, the Waze server can carry out data aggregation and data analysis to provide real-time traffic information.
\begin{figure}[!ht]
\setlength{\abovecaptionskip}{0pt} 
\setlength{\belowcaptionskip}{0pt}
\begin{center}
\includegraphics[width=0.4\textwidth]{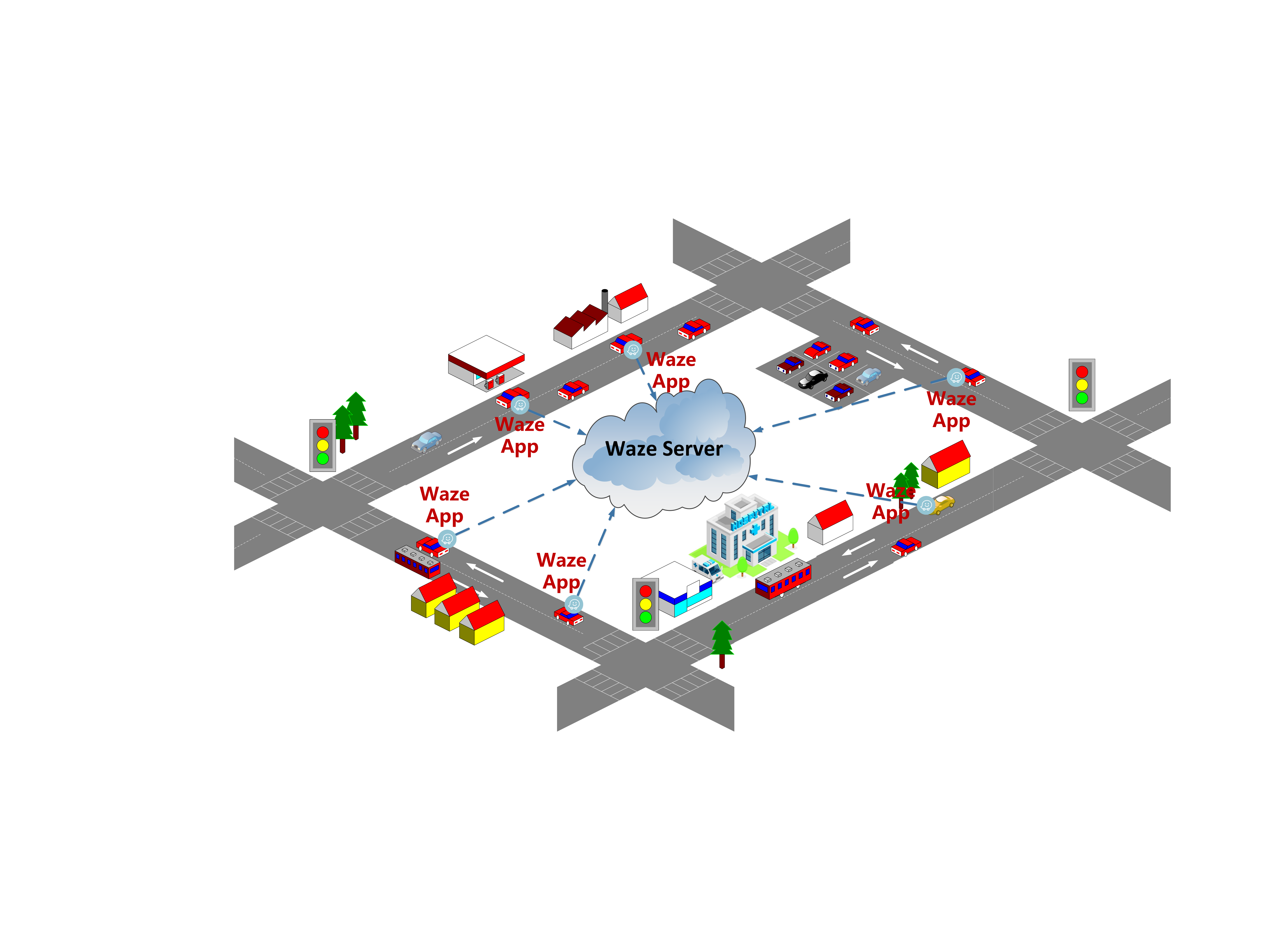}
\end{center}
\caption{An illustration of the Waze crowdsensing system.} \label{fig:waze}
\end{figure}

Clearly, participating in real-time data aggregation is costly for PUs, since it consumes not only PUs' time but also the system resources (e.g., computing and communication energy) in mobile devices. It is essential to design \emph{incentive mechanisms} to stimulate PUs' participation. In real-time data aggregation, PUs may drop out of the system halfway (e.g., close the account or even uninstall the crowdsensing app) if their long-term cumulative compensation expectations are not fulfilled, leading to insufficient number of PUs that continuously provide high quality services. Thus, we emphasize that the incentive mechanism design for real-time data aggregation needs to take a global perspective, guaranteeing that each PU has a desirable cumulative compensation and thus maintaining their long-term participation. This is quite different from previous studies on static incentive mechanism design, which encourages PUs to participate in one-time sensing tasks. 

To maintain PUs' long-term participation, there are two main concerns. On one hand, when contributing their sensing data, PUs suffer from potential privacy losses (e.g., the sensing data collected by the Waze server involves locational information), which hinders them from long-term participation. To dispel PUs' misgivings about potential privacy losses, an incentive mechanism should be well designed to allow PUs to fully control their own data privacy, and to provide adequate compensations for their privacy losses. On the other hand, participating in crowdsensing always incurs some indirect costs besides directly contributing sensing data. For example, when running in background, the crowdsensing app still consumes computation and bandwidth resources, even if the PU is not selected as a data contributor. However, the objective of the extensively studied static incentive mechanisms \cite{jin2016inception,yang2012crowdsourcing,jin2015quality,zhang2016privacy,luo2015crowdsourcing,duan2012incentive,peng2015pay,cheung2015distributed} is to select a subset of PUs that maximize FC's \emph{current utility} to carry out the sensing tasks, possibly rendering some PUs being unselected for a long period. Clearly, if a PU is rarely selected as a contributor in a long time, her cumulative cost can not be well compensated and loses interest in participation. Resultantly, incentive schemes in different time slots should be jointly optimized to maintain PUs' long-term participation.

In this paper, to resolve the above two concerns, we first allow PUs to preserve their data privacy by adding well-calibrated noise to their raw sensing data before reporting them. Such an approach bears two distinct merits: (i) PUs can fully control their private data; (ii) PUs' privacy-preserving levels (PPLs) can be quantified by the celebrated notion of \emph{differential privacy}. Then, we provide a framework that jointly optimizes the incentive schemes in different slots to maintain PUs' long-term participation. Specifically, we take an auction approach and propose LEPA\footnote{The name LEPA comes from \underline{L}ong-t\underline{E}rm \underline{P}rivacy-preserving data \underline{A}ggregation}, a novel incentive mechanism that maintains PUs' long-term participation in real-time data aggregation.

There are three major challenges in designing LEPA. Firstly, adding noise will inevitably impair the aggregation accuracy of FC. To achieve desirable aggregation accuracy in a cost-efficient manner, it is essential to quantify PUs' PPLs on FC's aggregation accuracy, which requires the calculation of the integral of the cumulative noises. This is difficult because for the mechanisms that achieve differential privacy, either the summation of $n$ Laplacian variables is a complex non-integrable distribution (i.e., Laplacian mechanism) or the probability density function of Gaussian distribution is non-integrable (i.e., Gaussian mechanism). Secondly, to maintain PUs' long-term participation, it is essential to guarantee that each PU will be selected at least once in several rounds, i.e., each PU's selected probability should be no less than a certain threshold. However, the calculation of such a selected probability depends on not only the past and current user selection strategies, but also future user selection strategies. This renders user selection strategies interrelated cross time domain, making the long-term incentive mechanism design more challenging. Thirdly, through adopting auction approach, FC selects PUs based on their biddings, which is determined by their sensing cost and privacy cost. Strategic PUs may misreport their biddings to maximize their own benefits, leading to high costs of achieving a desirable aggregation accuracy. It is challenging to devise a truthful incentive mechanism when PUs behave strategically.

\subsection{Summary of Main Contributions}
In this paper, we quantify the impact of PUs' PPLs on FC's aggregation accuracy with proper privacy and accuracy measures, and propose an on-line incentive mechanism to deal with the strategy coupling issue incurred by the long-term participation requirements. The contributions of this paper are summarized as follows:

\begin{itemize}
	\item Our work takes the first attempt to systematically study the long-term incentive mechanism for privacy-preserving data aggregation.
  	\item We allow PUs to add well-calibrated noise to their raw sensing data before reporting them. By creatively using tail bound theory, we successfully bypass the direct calculation of integrals of non-integrable distributions, and derive the quantitative impact of PUs' PPLs on FC's aggregation accuracy.
  	\item An on-line framework is proposed to deal with the strategy coupling issue by transforming the long-term participation requirement to a queue stability problem. Specifically, for each PU, we construct a virtual sensing request queue with a constant input rate, and its output rate is proportional to the PU's selected rate. To meet the long-term participation requirement is equivalent to ensuring the stability of all the virtual sensing request queues. We then jointly optimize FC's total payment and the queue length.
  	\item Considering PUs' strategic behaviors and the combinatorial nature of the tasks, we design an incentive mechanism based on reverse combinatorial auction, where FC acts as an auctioneer and purchases private sensing data from PUs. Due to the NP-hardness of the combinatorial auction, we propose a computationally efficient mechanism with close-to-optimal performance, meanwhile guaranteeing individual rationality and truthfulness.
  	\item Both theoretical analysis and extensive simulations are conducted to corroborate the performance of the proposed incentive mechanism.
\end{itemize}

\subsection{Organization of This Paper}
The remaining of this paper is organized as follows. In Section \ref{relatedwork}, we discuss the related work. We introduce the problem formulation in Section \ref{preliminary} and provide the design details of LEPA in Section \ref{mechanismdesign}. Section \ref{theoreticalanalysis} provides theoretical analysis to LEPA, and Section \ref{simulation} conducts extensive simulations to illustrate the effectiveness of LEPA. Section \ref{conclusion} concludes this paper.

\section{Related Work}\label{relatedwork}
Game theory has been widely adopted to capture PUs' strategic behaviors for incentive mechanism design in the crowdsensing system. These incentive mechanisms are based on either auction \cite{jin2016inception,yang2012crowdsourcing,jin2015quality,zhang2016privacy} or other game-theoretical models \cite{luo2015crowdsourcing,duan2012incentive,peng2015pay,cheung2015distributed}. Most of the previous studies consider static incentive, which encourages PUs to participate in one-time sensing tasks. These works are inapplicable to the real-time crowdsensing applications. Recently, a few incentive mechanisms have been proposed to stimulate PUs' long-term participation. In \cite{lee2010sell}, Lee et al. proposed to provide \emph{virtual credit} to PUs who lost in the previous auction round, so that their winning probabilities increase in the future rounds. Gao et al. in \cite{gao2015providing} proposed a long-term sensor selection mechanism for a general location-aware crowdsensing system. However, both of them study an abstract and general crowdsensing system without addressing PUs' privacy concern. Further, they fail to provide a truthful mechanism with proofed performance guarantee.



Another line of related works view privacy as a good and aim to compensate PUs' privacy losses. In their seminal work \cite{ghosh2015selling}, Ghosh et al. designed an auction mechanism to maximize FC's utility subject to a budget constraint. Thereafter, several improved mechanisms have been proposed, especially considering the correlation between the privacy preference and the private data. Nevertheless, the existing works either deprived PUs' direct control on their data privacy (since FC is assumed to be trustworthy and responsible for protecting their privacy) \cite{jin2016inception,fleischer2012approximately,ligett2012take,nissim2014redrawing}, or focus on PUs' equilibrium behavior \cite{wang2016value}, which may end up with an inefficient equilibrium, i.e., FC may not achieve a desirable aggregation accuracy in a cost-efficient manner.


Different from the existing works on abstract crowdsensing tasks, our study investigates a kind of widely deployed crowdsensing tasks, i.e., real-time data aggregation. We proposed a novel paradigm that allows PUs to add well-calibrated noise to their raw sensing data before reporting them, enabling PUs' fully control of their data privacy. Then, rather than incentivize PUs' one-time participation, our incentive mechanism can jointly optimize FC's total payment and PUs' long-term participation requirements, meanwhile satisfies proofed truthfulness and near-to-optimal performance.

\section{Problem Formulation}\label{preliminary}
In this section, we present the system overview, as well as descriptions to the quantitative impact of PUs' PPLs on FC's aggregation accuracy, auction model and design objectives.

\subsection{System overview}
The crowdsensing system considered in this paper consists of an FC and a set $\mathcal{U} = \{u_1, u_2, \ldots, u_n\}$ of $n$ PUs, as illustrated in Figure \ref{fig:systemmodel}. FC has a task pool which possesses a set $\mathcal{T} = \{\tau_1, \tau_2, \ldots, \tau_k\}$ of $k$ sensing tasks. The sensing tasks request data slot by slot, where each time slot ranges from several minutes to several hours, depending on the updating frequency requirements of the sensing tasks. At time slot $t$, each PU can execute a subset of tasks in the task pool based on the capability of her mobile device and her current location. Let $y_{ij}(t) \in \{0,1\}$ denote whether PU $i$ can execute task $j$, and $y_i(t)$ denote PU $i$'s capability vector in time slot $t$. For each task $\tau_j$, FC should collect plenty of sensing data from PUs and carry out some aggregation operations, such as average or histogram, to abstract some valuable patterns. For easy exposition, in this paper, we will investigate the average aggregation\footnote{We leave the discussion of other kinds of data aggregations in future work.}, which constitute a large portion of currently deployed crowdsensing systems, such as traffic monitoring systems that leverage vehicular PUs' GPS data to estimate average traffic speed of a specific road. Then FC can select a subset of PUs to implement all sensing tasks. We denote whether PU $i$ is selected in time slot $t$ as $x_i(t) \in \{0,1\}$.

Clearly, the sensing data usually contain sensitive information about PUs, which will hinder them from providing their raw sensing data. Different from most of the previous studies on privacy-preserving data aggregation, we do not assume FC to be trustworthy since it maybe compromised or the communication channel maybe eavesdropped. Making a paradigm shift, we remove the trustworthy FC assumption and allow PUs to add well-calibrated noises to their raw sensing data before reporting them. Specifically, the workflow of the proposed crowdsensing system is as follows.



\begin{figure}[!ht]
\setlength{\abovecaptionskip}{0pt} 
\setlength{\belowcaptionskip}{0pt}
\begin{center}
\includegraphics[width=0.4\textwidth]{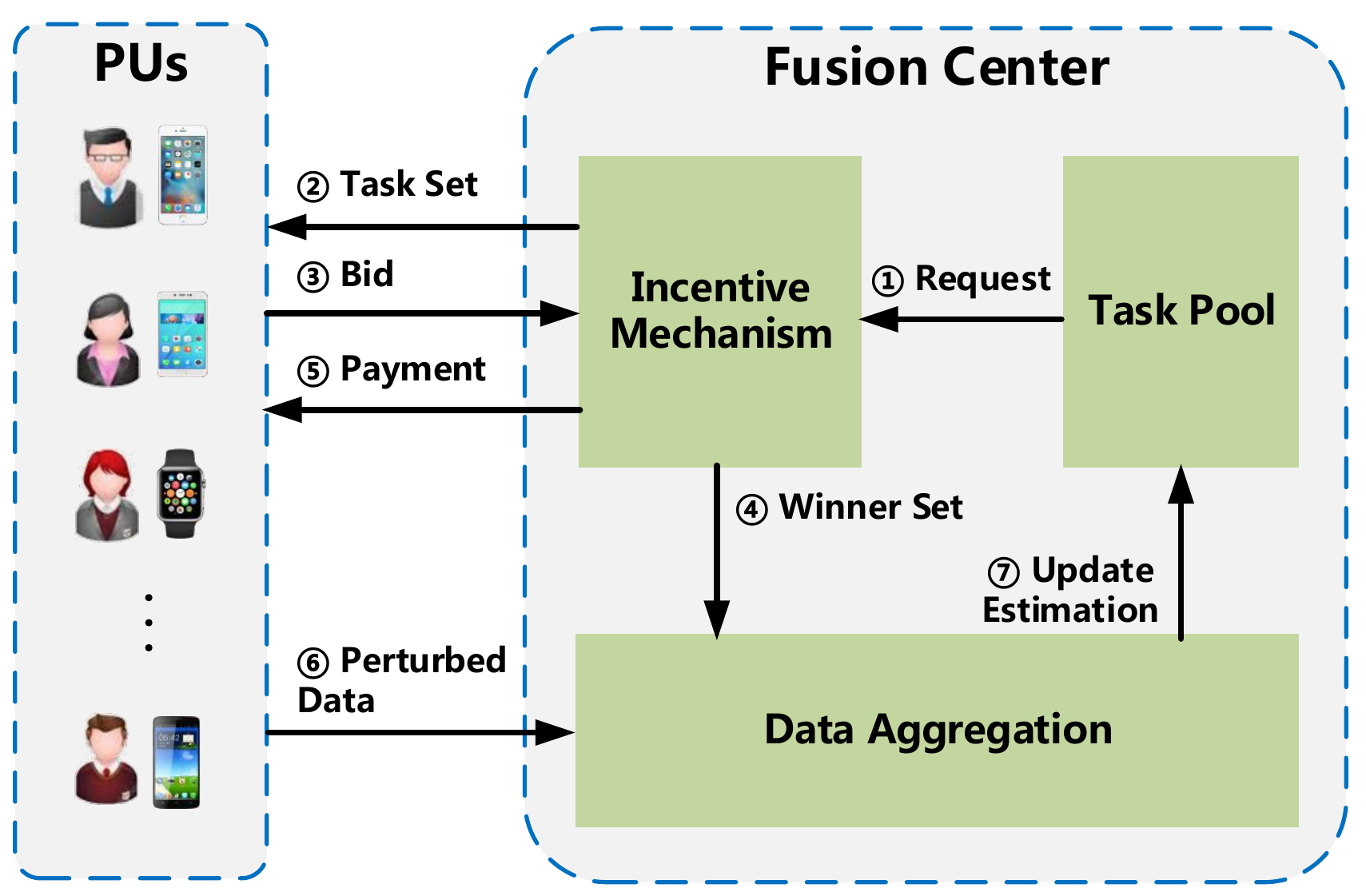}
\end{center}
\caption{A framework of the proposed crowdsensing system. The circled numbers stand for procedures of the system operation.} \label{fig:systemmodel}
\end{figure}


\begin{itemize}
  \item Firstly, at time slot $t$, each task launches a sensing request, and specifies its aggregation accuracy requirement to the incentive mechanism (step \ding{172}).
  \item \textbf{Incentive mechanism:} Upon receiving the sensing requests, FC conducts a reverse combinatorial auction to purchase PUs' private sensing data. Specifically, FC first broadcasts all sensing requests and specifies a homogeneous PPL to PUs (step \ding{173}). Then, PU $i$ submits a bidding $b_i(t) = \{ \Gamma_i(t), b_i^s(t), b_i^p(t) \}$ to FC, where $\Gamma_i(t)$ is the set representation of $y_i(t)$\footnote{We will use $\Gamma_i(t)$ and $y_i(t)$ interchangeably as convenient in the following parts.}, $b_i^s(t)$ and $b_i^p(t)$ are the sensing cost and unit privacy cost of PU $i$ at time slot $t$ (step \ding{174}). Finally, FC selects an optimal set of winners $\mathcal{S} = \{ u_1, u_2, \ldots, u_l \}$ (step \ding{175}) that could implement all sensing tasks, and provides them corresponding payments $\mathcal{P} = \{ p_1, p_2, \ldots, p_l \}$ (step \ding{176}).
  \item \textbf{Data aggregation:} Each selected PU adds a well-calibrated noise to her raw sensing data, i.e., $\tilde{d_i} = d_i + \eta_i$, and uploads the perturbed sensing data $\tilde{d_i}$ to FC (step \ding{177}). FC then carries out data aggregation for each task.
  \item Finally, the aggregation results are leveraged to update the estimation of sensing tasks (step \ding{178}).
\end{itemize}

\begin{remark}
In practice, once the auction is finished, the crowdsensing apps of winner PUs will collect the sensing data and add the specified levels of noise before reporting them to FC automatically, such that PUs can not cheat on their noise distribution, as well as the corresponding noise level, e.g., adding higher noise level to better preserve privacy.
\end{remark}

\subsection{Privacy versus Accuracy}
Intuitively, adding noise will inevitably deteriorate FC's aggregation accuracy. In order to provide accurate incentive in presence of noisy data, it is essential to quantify the impact of the noise level, i.e., PUs' PPLs, on FC's aggregation accuracy with proper privacy and accuracy definition.

We adopt the celebrated notion of differential privacy \cite{dwork2006calibrating}, which has rigorous mathematical definition and is independent of attackers' background knowledge, to define PUs' privacy. Intuitively, differential privacy guarantees that the attackers can not distinguish between the neighboring inputs with high confidence given a randomized output. Traditional differential privacy assumes a trustworthy curator which is not applicable to our system. In this paper, we concern about local differential privacy where PUs add noises to their raw data locally. 

Defining local differential privacy requires the following adjacency definition:
\begin{definition}[$\zeta_i$-adjacency]
Two real numbers $d$ and $d'$ are $\zeta_i$-adjacency if $|d - d'| \leq \zeta_i$, where $\zeta_i$ is the length of the value range of $d$.
\end{definition}

Based on the above adjacency definition, we define the local differential privacy as follows:
\begin{definition}[$\epsilon_i$-local differential privacy]\label{definition:dp}
A random algorithm $\{\mathcal{A}:R \rightarrow R|\mathcal{A}(d_i)=d_i+\eta_i\}$ achieves $\epsilon_i$-local differential privacy, if for all pairs of $\zeta_i$-adjacency data $d_i$ and $d_i'$, and observation $d^{obs}$,
\begin{align*}
Pr[\mathcal{A}(d_i)=d^{obs}]\leq e^{\epsilon_i}Pr[\mathcal{A}(d_i')=d^{obs}].
\end{align*}
\end{definition}

Intuitively, smaller $\epsilon_i$ means higher PPL, since it is more difficult to distinguish $d_i$ from $d_i'$ given the observation $d^{obs}$. We then adopt the widely used Laplacian mechanism to achieve local differential privacy. It is known in \cite{dwork2014algorithmic} that, when the Laplacian mechanism is used, i.e., $\eta_i \sim Lap(0,a_i)$, we can achieve $\epsilon_i$-local differential privacy if $a_i = \frac{\zeta_i}{\epsilon_i}$.


Then, we define the aggregation accuracy in Definition \ref{definition:accuracy}.
\begin{definition}[$(\alpha,\delta)$-accuracy]\label{definition:accuracy}
The aggregation $\hat{s}$ of privacy-preserving sensing data is said to achieve $(\alpha,\delta)$-accuracy if
\begin{align*}
Pr[|s - \hat{s}| \geq \alpha] \leq \delta.
\end{align*}
\end{definition}

From statistical perspective, $\alpha$ is the confidence interval of the aggregation error and $\delta$ is the confidence level.

With the privacy and accuracy definitions, we derive the quantitative impact of PUs' PPLs on FC's aggregation accuracy in the following lemma.
\begin{lemma}\label{lemma:relationship}
In time slot $t$, task $\tau_j$ achieves $(\alpha_j,\delta_j)$-accuracy if
\begin{small}
\begin{align}\label{formula:relationship}
\frac{\sum\limits_{i:u_i \in \mathcal{U}} x_i(t) y_{ij}(t) \frac{\zeta}{\epsilon_i^2(t)}}
{\left(\sum\limits_{i:u_i \in \mathcal{U}} x_i(t) y_{ij}(t)\right)^2}
\leq \frac{\alpha_j^2(t) \delta_j(t)}{2},
\end{align}
\end{small}
where $\zeta$ is the range of all PUs' sensing data. 
\end{lemma}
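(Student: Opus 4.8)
My plan is to reduce the $(\alpha_j(t),\delta_j(t))$-accuracy requirement to a tail estimate on the aggregate of the injected noises, and then to control that tail with a moment-based inequality so as to sidestep the intractable density of a sum of Laplacians. Fix a task $\tau_j$ in slot $t$ and let $N_j=\sum_{i:u_i\in\mathcal{U}}x_i(t)y_{ij}(t)$ count the contributing winners. Because FC performs average aggregation over the perturbed data $\tilde d_i=d_i+\eta_i$, the estimate is $\hat s=\frac{1}{N_j}\sum_i x_i(t)y_{ij}(t)\tilde d_i=s+\frac{1}{N_j}\sum_i x_i(t)y_{ij}(t)\eta_i$. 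Thus the error is precisely the average of the noises: with $Z=\sum_i x_i(t)y_{ij}(t)\eta_i$, the event $|s-\hat s|\geq\alpha_j(t)$ coincides with $|Z|\geq N_j\alpha_j(t)$, and the lemma reduces to showing $Pr[|Z|\geq N_j\alpha_j(t)]\leq\delta_j(t)$ under hypothesis \eqref{formula:relationship}.

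The direct temptation is to evaluate this probability from the law of $Z$, but $Z$ is a sum of independent Laplacian variables whose convolved density has no usable closed form, which is exactly the difficulty highlighted in the introduction. I would therefore use only the first two moments of $Z$. Each $\eta_i\sim Lap(0,a_i)$ with scale $a_i=\zeta/\epsilon_i(t)$ is symmetric with $E[\eta_i]=0$ and variance $2a_i^2$, so by independence across winners $E[Z]=0$ and $\mathrm{Var}(Z)=2\sum_i x_i(t)y_{ij}(t)a_i^2$. Chebyshev's inequality then gives $Pr[|Z|\geq N_j\alpha_j(t)]\leq \mathrm{Var}(Z)/\bigl(N_j\alpha_j(t)\bigr)^2$, a bound that depends on the second moment alone and never requires integrating the density of $Z$.

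It remains to force the right-hand side below $\delta_j(t)$ and simplify. Substituting $a_i=\zeta/\epsilon_i(t)$ into $\mathrm{Var}(Z)$, dividing through by $N_j^2\alpha_j^2(t)$, and rearranging turns the requirement $\mathrm{Var}(Z)/(N_j\alpha_j(t))^2\leq\delta_j(t)$ into exactly inequality \eqref{formula:relationship}; here the factor $2$ in the denominator is the variance multiplier $2a_i^2$ of the Laplace law, and the squared sum $(\sum_i x_i(t)y_{ij}(t))^2=N_j^2$ in the denominator comes from the averaging normalization $1/N_j$.

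The genuine obstacle is not the algebra but the step that avoids it: because a finite sum of independent Laplacians has no elementary cumulative distribution function, the naive CDF route is closed off, and the whole derivation hinges on replacing it with a tail bound that consumes only $\mathrm{Var}(Z)$ (Chebyshev here; a sharper moment-generating-function/Chernoff bound on the Laplace sum is also available, though it would trade the clean factor for a logarithmic dependence on $\delta_j(t)$). Along the way I would verify the two model facts the bound relies on — that the $\eta_i$ are mutually independent, which follows from each PU injecting noise locally, and that the broadcast homogeneous PPL keeps the range $\zeta$ common to all contributors so that the per-winner variances share the same $\zeta$.
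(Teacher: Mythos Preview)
Your proposal is correct and follows essentially the same route as the paper's own proof: write the aggregation error as the average of the injected Laplacian noises, compute its variance from $D(\eta_i)=2a_i^2$ and independence, apply Chebyshev's inequality to bound $\Pr[|\hat s_j-s_j|\geq\alpha_j]$, and then substitute $a_i=\zeta/\epsilon_i(t)$. Your write-up is in fact a bit more careful than the paper's in tracking the task-specific indicators $x_i(t)y_{ij}(t)$ rather than collapsing to $|\mathcal S|$; note, incidentally, that the substitution yields $a_i^2=\zeta^2/\epsilon_i^2$, so the numerator in \eqref{formula:relationship} should carry a $\zeta^2$ rather than $\zeta$---a typo in the lemma statement that your derivation would expose.
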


The proof can be found in Appendix \ref{appendix:relationship}. In this paper, we consider the scenario where FC broadcasts a homogeneous PPL $\epsilon$ to all PUs\footnote{The selection principle of $\epsilon$ will be discussed in the simulation part.}, i.e., all PUs share the same $\epsilon$. To simplify the analysis, we transform (\ref{formula:relationship}) to
\begin{align*}
\sum\limits_{i: u_i \in \mathcal{U}} x_i(t) y_{ij}(t) \geq \frac{2\zeta}{\epsilon^2 \alpha_j^2(t) \delta_j(t)}.
\end{align*}

We denote $r_j(t) = \frac{2\zeta}{\epsilon^2 \alpha_j^2(t) \delta_j(t)}$ as task $\tau_j$'s \emph{aggregation accuracy requirement}, and leverage a vector $\textbf{r(t)} \\ = (r_1(t),\cdots,r_m(t))$ to denote the accuracy requirements of all tasks.

\subsection{Auction Model}
In this paper, we investigate the long-term incentive mechanism for privacy-preserving data aggregation, where tasks require real-time sensing data from PUs. We assume that PUs are all selfish and strategic, aiming to maximize their own utility. Thus, we model our incentive mechanism as the Long-term Privacy-preserving Reverse Combinatorial (LPRC) auction.
\begin{definition}[LPRC Auction] \label{definition:auction}
In a long-term privacy-preserving reverse combinatorial auction, each PU $u_i$ is interested in a subset of tasks in time slot $t$, i.e., $y_i(t)$, and can bid her sensing cost $b_i^s(t)$ and unit privacy cost $b_i^p(t)$ for reporting the sensing data. Both $b_i^s(t)$ and $b_i^p(t)$ are $u_i$'s private information.
\end{definition}

Then, we define PU's utility and FC's payment in Definition \ref{definition:pu_utility} and Definition \ref{definition:fc_payment}.

\begin{definition}[PU's Utility]\label{definition:pu_utility}
The utility of PU $i$ is given by
\begin{align*}
U_i(t) = \left\{\begin{array}{ll} p_i(t) - c_i^s(t) - c_i^p(t)\epsilon, &i\in \mathcal{S}, \\
0, &otherwise, \end{array}\right.
\end{align*}
\end{definition}

\begin{definition}[FC's Payment]\label{definition:fc_payment}
In time slot $t$, FC's total payment $P(t)$ is given by
\begin{align*}
P(t) = \sum\limits_{i:u_i\in\mathcal{S}}p_i(t)
\end{align*}
\end{definition}

\subsection{Design Objective}
Since PUs are strategic and their sensing costs as well as privacy costs are unknown to FC, they may misreport their truthful biddings to achieve higher profits. For example, a selfish PU may report a higher sensing cost to achieve more compensation. Therefore, we should guarantee that our LPRC auction satisfies the following truthfulness objectives.
\begin{definition}[Truthfulness]
A LPRC auction satisfies truthfulness if and only if submitting the truthful bidding $b_i = (\Gamma_i, c_i^s, c_i^p)$ is the dominant strategy for all PUs, i.e., $U_i(b_i, b_{-i}) \\ \geq U_i(b_i', b_{-i}), \forall b_i' \neq b_i$.
\end{definition}

Truthfulness guarantees that submitting the truthful bidding can maximize each PU's utility, so that all PUs have no incentives to misreport their biddings. Apart from truthfulness, our mechanism should also satisfies individual rationality, i.e., each PU's utility is non-negative so that all PUs are willing to participate.
\begin{definition}[Individual Rationality]
A LPRC auction satisfies individual rationality if and only if $U_i \geq 0$ for all $u_i \in \mathcal{U}$.
\end{definition}

For the real-time data aggregation tasks, PUs suffer from some indirect cost when running crowdsensing app in background. If a PU is rarely selected for a long time, she will probably choose to leave the platform. To maintain PUs' activity, we should guarantee that the probability that each PU being selected is no less than the minimum selected probability requirement $D$. Formally, we define this requirement as the long-term participation constraint in Definition \ref{definition:longterm}.
\begin{definition}[Long-term Participation] \label{definition:longterm}
A LPRC auction satisfies long-term participation constraints if and only if $u_i$'s selected probability satisfies
\begin{align*}
\frac{1}{T} \sum_{t \in T}x_i(t) \geq D, \ \ \ \forall u_i \in \mathcal{U}.
\end{align*}
\end{definition}

In practice, we can conduct questionnaires to determine the minimum threshold $D$ that maintains PUs' long-term participation.

\section{Long-term Incentive Mechanism}\label{mechanismdesign}
In this section, we first provide the design details on LEPA, including mathematical formulation, on-line auction transformation and on-line auction design. Then, we discuss some practical issues in implementing LEPA.

\subsection{Mathematical Formulation}
As aforementioned, LEPA is based on the LPRC auction defined in Definition \ref{definition:auction}. In this paper, we aim to design an LPRC auction that minimizes FC's long-term total payment while guaranteeing the aggregation accuracy requirements of all sensing tasks. Therefore, in each time slot, FC determines the winner PUs and the corresponding payments by solving the following LPRC long-term total payment minimization (LPRC-LTPM) problem.
\begin{problemltpm}\label{problem:ltpm}
\begin{small}
\begin{align}
\min \ \ \ \ &\sum\limits_{t\in T}\sum\limits_{i:u_i\in\mathcal{U}}p_i(t)x_i(t) \\
s.t. \ \ \ \
&\frac{1}{T}\sum\limits_{t\in T}x_i(t) \geq D, \forall u_i \in \mathcal{U}, 
\label{constraint:longterm} \\
&\sum\limits_{i: u_i \in \mathcal{U}} x_i(t) y_{ij}(t) \geq r_j(t), \forall \tau_j \in \mathcal{T},  
\label{constraint:accuracy} \\
&x_i(t) \in \{0,1\}, p_i(t) \in [0,+\infty],
\end{align}
\end{small}
\end{problemltpm}

\textbf{Constants:} In each time slot, the LPRC-LTPM problem takes as input the task set $\mathcal{T}$, PU set $\mathcal{U}$, accuracy requirements vector $\textbf{r(t)}$, bidding profile $(y_i(t), b_i^s(t), b_i^p(t)), \forall u_i \in \mathcal{U}$, as well as the minimum selected probability requirement $D$.

\textbf{Variables:} The LPRC-LTPM problem has a vector of $n$ binary variables $\textbf{x(t)} = (x_1(t), x_2(t), \cdots, x_n(t))$. Any $x_i(t) = 1$ indicates that $u_i$ is selected in time slot $t$ (i.e., $u_i \in \mathcal{S}$), whereas $x_i(t) = 0$ means $u_i \not\in \mathcal{S}$. Another vector of $n$ variables $\textbf{p(t)} = (p_1(t), p_2(t), \cdots, p_n(t))$ is the payment profile where $p_i(t)$ takes non-negative real value. If $u_i \not\in \mathcal{S}$ in time slot $t$, $p_i(t) = 0$.

\textbf{Objective function:} The objective function is set as the long-term total payment.

\textbf{Constraints:} Constraint (\ref{constraint:longterm}) is the long-term participation constraint as defined in Definition \ref{definition:longterm}. Constraint (\ref{constraint:accuracy}) guarantees that each task $\tau_j$'s accuracy requirement is satisfied. In addition, any solution to LPRC-LTPM problem should satisfy two inherent constraints, i.e., truthfulness and individual rationality.

Notice that constraint (\ref{constraint:longterm}) is a time average constraint, where current strategy is coupled with future strategies. However, PUs' biddings in the future are unknown, making the user selection in current time slot challenging. To tackle such a challenge, we transform the LPRC-LTPM problem to an on-line auction design problem.

\subsection{On-line Auction Transformation}

The main idea of the on-line auction is to transform the long-term participation requirements to the queue stability requirements, and leverage Lyapunov optimization theorem \cite{neely2010stochastic} to jointly optimize the queue stability and FC's total payment. Following this idea, we first construct a virtual request queue with arrival rate $D$ for each PU to buffer her sensing request. Clearly, the backlog of a virtual request queue corresponds to the cumulative sensing requests of each PU. Then, if $u_i$ is selected at time slot $t$, i.e., $x_i(t)=1$, one virtual request leaves the backlog.

Based on the above virtual request queue definition, we have the following queue dynamics:
\begin{align*}
q_i(t+1) = [q_i(t)-x_i(t)]^{+} + D,
\end{align*}
where $q_i(t)$ denotes the virtual request queue backlog of $u_i$ at time slot $t$, and $[x]^+ = \max\{x,0\}$.

We can analyze the queue stability by the Lyapunov drift, which is defined as the difference between the Lyapunov functions in two adjacent time slots. Normally, the Lyapunov function is defined as quadratic sum of all queue backlogs, i.e.,
\begin{align*}
L(t) \triangleq \frac{1}{2}\sum\limits_{i: u_i \in \mathcal{U}} (q_i(t))^2.
\end{align*}

Thus, the Lyapunov drift is given by
\begin{align*}
\Delta(t) \triangleq L(t+1)-L(t).
\end{align*}

The Lyapunov drift theorem indicates that an algorithm greedily minimizing $\Delta(t)$ in each time slot guarantees the stabilities of all queues. Once all virtual request queues are stabilized, the long-term participation requirements are automatically satisfied.

Recall that the objective of LPRC-LTPM problem is to minimize FC's total payment meanwhile satisfying all long-term participation constraints, i.e., stabilizing all the virtual request queues. By the Lyapunov optimization theorem, we can minimize the following drift-plus-penalty to jointly stabilize the queues and optimize the objection function.
\begin{align*}
\Delta(t) + \gamma\sum\limits_{i:u_i\in\mathcal{U}}p_i(t)x_i(t),
\end{align*}
where $\gamma$ is a tuning parameter used to achieve the desirable tradeoff between queue stability and objective optimality.

Notice that $\Delta(t)$ is a quadratic function which is difficult to handle in the optimization problem. Thus, we focus on minimizing a specific linear upper bound of the drift-plus-penalty given by
\begin{small}
\begin{align}
\Delta(t) = & L(t+1) - L(t) \\
= & \frac{1}{2}\left(\sum\limits_{i:u_i\in\mathcal{U}}([q_i(t) - x_i(t)]^+ + D)^2
- \sum\limits_{i:u_i\in\mathcal{U}} q_i(t)^2\right) \\
\leq & \frac{1}{2} \Bigg(\sum\limits_{i:u_i\in\mathcal{U}}\big(q_i(t)^2 + D^2 + x_i(t)^2 + 2q_i(t)(D - x_i(t))\big)
\\ 
 & - \sum\limits_{i:u_i\in\mathcal{U}} q_i(t)^2\Bigg) \\
\leq & \sum\limits_{i:u_i\in\mathcal{U}}\frac{D^2 + 1}{2} + \sum\limits_{i:u_i\in\mathcal{U}} q_i(t)D - \sum\limits_{i:u_i\in\mathcal{U}} q_i(t)x_i(t), \label{formula:upperbound}
\end{align}
\end{small}
where the first inequality holds since $(\max[q-x,0]+D)^2 \leq q^2 + D^2 + x^2 + 2q(D-x)$, and the second inequality holds because $x_i(t) \in \{0,1\}$.

Notice that the first two parts of (\ref{formula:upperbound}) are constants. Minimizing (\ref{formula:upperbound}) is equivalent to minimizing \\ $\sum\limits_{i:u_i\in\mathcal{U}} -q_i(t)x_i(t))$. Therefore, we can transform the LPRC-LTPM problem to the following LPRC on-line total payment minimization (LPRC-OTPM) problem:
\begin{problemotpm}\label{problem:otpm}
\begin{align*}
\min \ \ \ \ &\sum_{i: u_i \in \mathcal{U}}[\gamma p_i(t) - q_i(t)]x_i(t) \\
s.t. \ \ \ \
&\sum\limits_{i: u_i \in \mathcal{U}} x_i(t) y_{ij}(t) \geq r_j(t), \forall \tau_j \in \mathcal{T} \\
&x_i(t) \in \{0,1\}, p_i(t) \in [0,+\infty],
\end{align*}
\end{problemotpm}

It is easy to show that the LPRC-OTPM problem is polynomial time reducible to the minimum weight set cover problem, meaning the LPRC-OTPM problem is NP-hard. It follows directly that calculating the optimal winner PUs and the corresponding payments profile are computationally inefficient when the number of PUs and tasks become large. To tackle such a problem, we present an on-line LPRC auction with near-to-optimal performance in the following subsection.



\subsection{On-line LPRC Auction Design}
\begin{algorithm}[htb]
\SetCommentSty{small}
\LinesNumbered

\caption{On-line LPRC Auction Design}
\label{algorithm:lyapunov}

\KwIn{$\epsilon$, $\textbf{b}$, $\textbf{r}$, $\gamma$, $\mathcal{U}$, $\mathcal{T}$;}
\KwOut{$\mathcal{S}$, $\textbf{p}$;}
\textbf{Initialization:} $\textbf{q} = \{0,\cdots,0\}$; \\
\ForEach{time slot $t = 0,1,\cdots,T$}
{
    Selection Rule: \\ 
    \ \ \ \ run \textbf{Algorithm \ref{algorithm:selection}} and output the winner user set $\mathcal{S}$; \\
    Payment Rule: \\
    \ \ \ \ run \textbf{Algorithm \ref{algorithm:price}} and output the payment profile $\textbf{p}$; \\
    Updating Rule: \\
    \ \ \ \ $q_i(t+1) = [q_i(t)-x_i(t)]^{+} + D_n$;
}

\end{algorithm}
We illustrate the on-line LPRC auction in Algorithm \ref{algorithm:lyapunov}. Notice that the on-line LPRC auction focus on user selection and payment determination strategy in time slot $t$. For ease of exposition, we remove the time index $t$ for related variables, e.g., we simplify the payment profile $\textbf{p(t)}$ to $\textbf{p}$. In the beginning of each time slot, the on-line LPRC auction runs Algorithms \ref{algorithm:selection} and \ref{algorithm:price} to determine the winner user set $\mathcal{S}$ and the corresponding payment profile $\textbf{p}$. Then, updating each $u_i$'s virtual request queue as line 7. We depict the winner user selection algorithm and payment determination algorithm in the following parts.

\begin{algorithm}[th]
\SetCommentSty{small}
\LinesNumbered

\caption{Winner User Selection}
\label{algorithm:selection}

\KwIn{$\epsilon$, \textbf{b}, \textbf{r}, $\gamma$, $\textbf{q}$, $\mathcal{U}$, $\mathcal{T}$;}
\KwOut{$\mathcal{S}$;}

\textbf{Initialization:}
$\mathcal{S} \leftarrow \varnothing$; 
$\textbf{r'} \leftarrow \textbf{r}$;

\While{$\sum_{j:\tau_j \in \mathcal{T}} r'_j \neq 0$}
{
    \tcp{Select the PU with minimum bidding accuracy ratio.}
    
    $l = \argmin_{i \in \mathcal{U}}\frac{b_i^s + b_i^p\epsilon - \frac{q_i}{\gamma}}{\sum_{j:\tau_j \in \Gamma_i} \min\{r'_j, 1\}}$;
    
    $\mathcal{S} \leftarrow \mathcal{S} \cup \{u_l\}$;
    
    $\mathcal{U} \leftarrow \mathcal{U} \setminus \{u_l\}$;
    
    \tcp{Update $r'_j$}
    
    \ForEach{$j: \tau_j \in \Gamma_l$}
    {
        $r'_j \leftarrow r'_j - \min\{r'_j, 1\}$;
    }
}

\Return{$\mathcal{S}$};
\end{algorithm}
The winner user selection algorithm given in Algorithm \ref{algorithm:selection} takes as input the PPL $\epsilon$, bidding profile $\textbf{b}$ (including each $u_i$'s cost $c_i^s,c_i^p$ and capability vector $y_i$), accuracy requirement vector $\textbf{r}$, tuning parameter $\gamma$, queue vector $\textbf{q}$, the PU set $\mathcal{U}$ and the task set $\mathcal{T}$. Firstly, we initialize the winner user set $\mathcal{S}$ (line 1). Then, in each loop, the PU with minimum bidding accuracy ratio is selected until all tasks' accuracy requirements are satisfied, i.e., $\sum_{j:\tau_j \in \mathcal{T}} r_j = 0$. The bidding accuracy ratio is defined in line 3, meaning each PU's contribution to the optimization problem. Finally, in the end of each loop, the remaining accuracy requirement of each task is updated (line 7).

\begin{algorithm}[ht]
\SetCommentSty{small}
\LinesNumbered

\caption{Payment Determination}
\label{algorithm:price}

\KwIn{$\epsilon$, $\textbf{b}$, $\textbf{r}$, $\gamma$, $\textbf{q}$, $\mathcal{U}$, $\mathcal{T}$, $\mathcal{S}$;}
\KwOut{$\textbf{p}$}

\textbf{Initialization:}
$\textbf{p} \leftarrow (0,\cdots,0)$;

\ForEach{$i:u_i \in \mathcal{S}$}
{
    run Algorithm \ref{algorithm:selection} on $\mathcal{U} \setminus \{u_i\}$;
    
    $\mathcal{S'} \leftarrow$ the winner set of step 3;
    
    \tcp{Calculate payment}
    
    \ForEach{$k:u_k \in \mathcal{S'}$}
    {
        $r'_j \leftarrow$ task $\tau_j$'s $r'_j$ when $u_k$ was selected;
        
        $p_i \leftarrow \max\{p_i, \frac{\sum\limits_{j:\tau_j \in \Gamma_i} \min\{r'_j, 1\}}{\sum\limits_{j:\tau_j \in \Gamma_k} \min\{r'_j, 1\}} (b_k^s + b_k^p\epsilon - \frac{q_k}{\gamma}) + \frac{q_i}{\gamma}\}$;
    }
}

\Return{$\textbf{p}$}
\end{algorithm}
Next, we decide the payments to the winner PUs by Algorithm \ref{algorithm:price}. The payment determination algorithm takes as input all the input parameters of Algorithm \ref{algorithm:selection}, as well as the winner user set $\mathcal{S}$ returned by Algorithm \ref{algorithm:selection}. Firstly, we initialize the payment vector $\textbf{p}$. Then, for each winner user $u_i$, run Algorithm \ref{algorithm:selection} on the set of users except $u_i$. We denote the winner user set in this case as $\mathcal{S'}$ (line 3-4). Finally, each winner user $u_i$ is paid by the maximum virtual bidding price $b_{ik}^v$ that makes her replacing $u_k \in \mathcal{S'}$ as the winner (line 5-7). To achieve this, we should guarantee that
\begin{align*}
\frac{b_{ik}^v - \frac{q_i}{\gamma}}{\sum_{j:\tau_j \in \Gamma_i}\min\{r'_j, 1\}} = \frac{b_k^s + b_k^p\epsilon - \frac{q_k}{\gamma}}{\sum_{j:\tau_j \in \Gamma_k}\min\{r'_k, 1\}}
\end{align*}

Therefore
\begin{align*}
b_{ik}^v = \max\left\{p_i, \frac{\sum\limits_{j:\tau_j \in \Gamma_i} \min\{r'_j, 1\}}{\sum\limits_{j:\tau_j \in \Gamma_k} \min\{r'_j, 1\}} (b_k^s + b_k^p\epsilon - \frac{q_k}{\gamma}) + \frac{q_i}{\gamma}\right\}
\end{align*}

\subsection{Discussions on Practical Implementation}
We emphasize that the fundamental objective of LEPA is not to maintain all \emph{registered users} of the crowdsensing system active in the system, since some of them are unable to execute any sensing tasks in some circumstance, e.g., in the Waze application, some registered users go home and can no longer provide real-time traffic information. Our focus is that, once a registered user arrives at the crowdsensing system (and thus becomes a PU), we do not want she leaves the system for the reason that her expectation is not fulfilled (e.g., her privacy is breached or she is rarely selected). 

In practice, our framework is easily adapted to deal with the situation where the registered users arbitrarily arrive at and leave the crowdsensing system. Specifically, when a registered user arrives at the crowdsensing system, we establish a virtual request queue for this user, and empty the queue when this user leaves. The update of the virtual request queues are the same to Algorithm \ref{algorithm:lyapunov}, and the winner user selection and payment determination in each time slot are the same to Algorithms \ref{algorithm:selection} and \ref{algorithm:price}. Further, our framework is also suitable to deal with the situation where the updating frequency requirements of the tasks in the task pool are heterogeneous. Tasks in the task pool can request for updating any time base on their updating frequency requirements. In each time slot, we should only satisfy that the aggregation accuracy requirements of the tasks that request for updating (rather than all tasks in the task pool) are satisfied. The on-line auction design is the same to Algorithm \ref{algorithm:lyapunov}.

\section{Theoretical Analysis}\label{theoreticalanalysis}
In this section, we provide theoretical analysis to the proposed on-line LPRC auction, including truthfulness, individual rationality, computational complexity and approximation bound.



We prove the truthfulness of the proposed mechanism by the following Lemma:
\begin{lemma}
An auction is truthful if and only if the following two properties hold \cite{jin2015quality}.
\begin{itemize}
  \item \textbf{Monotonicity:} if $u_i$ wins the auction by biding $b_i$ and $y_i$, she also wins by biding $b_i' \leq b_i$ and $\Gamma_i' \supset \Gamma_i$ when other PUs' biddings are fixed. 
  \item \textbf{Critical payment:} PU who wins the auction is paid the maximum bidding price $b_i'$ such that bids $(\Gamma_i, b_i')$ still keeps this PU win. The maximum bidding price $b_i'$ is called critical payment.
\end{itemize}
\end{lemma}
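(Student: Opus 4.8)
The plan is to prove both directions of the equivalence as an instance of the standard single-parameter characterization of truthful auctions (the Myerson/Archer--Tardos threshold argument). The key preparatory observation is that, although $u_i$ submits a two-dimensional cost report $(b_i^s, b_i^p)$, her utility $U_i = p_i - c_i^s - c_i^p\epsilon$ depends on the costs only through the scalar $b_i = b_i^s + b_i^p\epsilon$, and $\epsilon$ is the fixed homogeneous PPL broadcast by FC. I can therefore collapse the cost report to a single effective bid $b_i$, reducing the problem to a single parameter in the cost dimension (with the capability set $\Gamma_i$ treated as a separate monotone dimension). Throughout I fix the other PUs' bids $b_{-i}$ and reason about $u_i$ in isolation, as dominant-strategy truthfulness requires.

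For the ``if'' direction I would fix $u_i$'s true type $(\Gamma_i, c_i)$ and let $b_i^\ast$ denote the critical payment associated with the reported capability set; the crucial fact is that $b_i^\ast$ depends only on $b_{-i}$ and $\Gamma_i$, not on $u_i$'s own cost report. If the truthful bid $c_i$ wins, monotonicity gives $c_i \le b_i^\ast$, so $U_i = b_i^\ast - c_i \ge 0$; any overbid $b_i' > b_i^\ast$ makes $u_i$ lose with utility $0$, while any underbid $b_i' \le b_i^\ast$ keeps her winning at the unchanged payment $b_i^\ast$, hence yields the same utility. If the truthful bid loses, then $c_i > b_i^\ast$, so any winning deviation pays only $b_i^\ast < c_i$ and gives negative utility, making losing weakly better. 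Either way truthful cost reporting is optimal, and an analogous monotonicity argument shows that shrinking the reported $\Gamma_i$ below the truth can only hurt.

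For the ``only if'' direction I would argue by contraposition, invoking the individual rationality established separately in the paper. If monotonicity fails there are bids $b_i' < b_i$ with $u_i$ winning at $b_i$ but losing at $b_i'$; writing $p(b_i)$ for the winning payment, the incentive constraint of a PU with true cost $b_i'$ (who must not gain by overbidding to $b_i$) forces $p(b_i) \le b_i'$, whereas individual rationality of a PU with true cost $b_i$ forces $p(b_i) \ge b_i$, giving $b_i \le p(b_i) \le b_i' < b_i$, a contradiction. Given monotonicity, the winning region is a down-set with threshold $b_i^\ast$; I would then show the payment is constant on this region (otherwise a winner could deviate to a higher-paying winning bid) and equals $b_i^\ast$ (a payment below $b_i^\ast$ lets a PU with cost just under $b_i^\ast$ profit by shading upward, and a payment above $b_i^\ast$ violates individual rationality for a PU with cost just under the payment), which pins down the critical-payment rule.

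The main obstacle I anticipate is the combinatorial capability dimension $\Gamma_i$ rather than the scalar cost: the monotonicity clause couples a decrease in the bid with an enlargement of the reported task set, so the clean single-parameter threshold argument must be extended to confirm that these two directions of more-favorable reporting do not interact perversely, and that a PU cannot profit by jointly misreporting cost and capability. The remaining steps — the case analysis in the ``if'' direction and the contrapositive constructions in the ``only if'' direction — are routine once the reduction to an effective scalar cost and the independence of $b_i^\ast$ from the own report are in place.
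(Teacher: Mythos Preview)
The paper does not prove this lemma at all: it is stated with a citation to \cite{jin2015quality} and immediately used as a black box to verify that the proposed on-line LPRC auction is truthful. So there is no paper proof to compare against; you have supplied what the authors deferred to the literature, namely the standard Myerson/Archer--Tardos single-parameter threshold argument, and your reduction of the two-dimensional cost report $(b_i^s,b_i^p)$ to the effective scalar $b_i=b_i^s+b_i^p\epsilon$ is exactly the right preparatory step for this paper's setting.

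One genuine issue in your ``only if'' direction: you appeal to individual rationality to obtain $p(b_i)\ge b_i$ and later to rule out payments above the threshold, but the lemma as stated characterizes \emph{truthfulness} alone, not truthfulness-plus-IR. Fortunately IR is not needed. For the monotonicity contraposition, the inequality $p(b_i)\ge b_i$ already follows from truthfulness: a PU with true cost $b_i$ who wins by bidding $b_i$ must weakly prefer that outcome to deviating to $b_i'$ and losing, i.e.\ $p(b_i)-b_i\ge 0$. Similarly, to rule out a payment $p>b_i^\ast$, take a PU with true cost $c\in(b_i^\ast,p)$: bidding truthfully she loses (utility $0$), but underbidding to any $b\le b_i^\ast$ she wins with utility $p-c>0$, contradicting truthfulness directly. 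Replacing your IR invocations with these incentive-compatibility arguments closes the gap and makes the proof self-contained from the hypothesis of the lemma.
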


\begin{theorem}[Truthfulness]
The proposed on-line LPRC auction satisfies truthfulness property.
\end{theorem}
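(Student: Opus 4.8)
The plan is to invoke the preceding Lemma, which characterizes truthfulness as the conjunction of two structural properties of Algorithms~\ref{algorithm:selection} and~\ref{algorithm:price}: \emph{monotonicity} of the winner selection rule, and the fact that the payment rule charges every winner her \emph{critical payment}. I would therefore regard the composite cost bid $b_i = b_i^s + b_i^p\epsilon$ as the single quantity a strategic $u_i$ controls, together with her declared task set $\Gamma_i$, fix all other PUs' bids, and verify the two properties in turn.

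For monotonicity, the central observation is that the greedy selection criterion in line~3 of Algorithm~\ref{algorithm:selection}, the bidding accuracy ratio $\big(b_i - \frac{q_i}{\gamma}\big)/\sum_{j:\tau_j\in\Gamma_i}\min\{r'_j,1\}$, is weakly decreasing when $u_i$ lowers her bid to $b_i' \le b_i$ (the numerator shrinks) or enlarges her task set to $\Gamma_i' \supset \Gamma_i$ (the denominator can only grow, since each additional task contributes a nonnegative $\min\{r'_j,1\}$ term). I would couple the two executions of the \textbf{while} loop, the original run in which $u_i$ declares $(b_i,\Gamma_i)$ and the deviating run in which she declares $(b_i',\Gamma_i')$, and argue by induction on the rounds that $u_i$ is selected in the deviating run no later than in the original one: at every round her ratio under the deviation is no larger than under the truthful bid while every competitor's ratio is unchanged, so she cannot be passed over in favour of a rival that would not have preceded her before. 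Hence a winner remains a winner. This step is where I expect the genuine difficulty to lie, because lowering one bid perturbs the entire trajectory of remaining requirements $r'_j$, so the coupling must be arranged so that the two runs stay comparable round by round rather than drifting apart.

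For the critical payment property, I would show that the value $p_i$ returned by Algorithm~\ref{algorithm:price} is exactly the supremum of bids at which $u_i$ still wins. The algorithm re-runs the selection on $\mathcal{U}\setminus\{u_i\}$ to obtain $\mathcal{S'}$, and for each $u_k\in\mathcal{S'}$ solves for the virtual bid $b_{ik}^v$ equating $u_i$'s ratio with $u_k$'s ratio at the round in which $u_k$ was selected. Since the ratio is the sole decision rule, $b_{ik}^v$ is precisely the largest bid at which $u_i$ would be chosen in place of $u_k$, and taking the maximum over all $u_k\in\mathcal{S'}$ yields the largest bid at which $u_i$ displaces some competitor and thereby remains a winner. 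Combined with the monotonicity established above, which guarantees a single threshold separating winning from losing, this identifies $p_i$ with the critical payment, and the Lemma then delivers truthfulness. I would also remark that the $\frac{q_i}{\gamma}$ shift appears identically in $u_i$'s ratio and in her payment, so it cancels in the threshold comparison and does not disturb the argument.
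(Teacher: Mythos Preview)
Your proposal is correct and follows the same route as the paper: both invoke the preceding Lemma and verify monotonicity of Algorithm~\ref{algorithm:selection} and the critical-payment property of Algorithm~\ref{algorithm:price}. The paper's own proof is considerably terser---it simply asserts that monotonicity is ``obvious'' from the ascending ratio order and that Algorithm~\ref{algorithm:price} pays the maximum virtual bid $b_{ik}^v$---whereas you spell out the coupling argument for monotonicity (tracking the two greedy trajectories round by round and noting they coincide until $u_i$ is selected) and the threshold interpretation of $p_i$; this extra care is warranted and does not depart from the paper's strategy.
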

\begin{proof}
We prove the truthfulness of the proposed on-line LPRC auction by showing that it satisfies both monotonicity property and critical payment property.
\begin{itemize}
  \item \textbf{Monotonicity:} We select the winner users in the ascending order of their bidding accuracy ratio. Given a fixed $\epsilon$, it is obvious that $u_i$ still wins by biding $\tilde{\Gamma_i} \supset \Gamma_i$ and $\tilde{b_i^s} + \tilde{b_i^p}\epsilon \leq b_i^s + b_i^p\epsilon$.
  \item \textbf{Critical payment:} According to the price determination algorithm proposed in Algorithm \ref{algorithm:price}, the winning PU is paid by the maximum bidding price $b_{ik}^v$.
\end{itemize}
\end{proof}

Notice that, the PU may bid $b_i^s \neq c_i^s$ and $b_i^p \neq c_i^p$ while $b_i^s + b_i^p\epsilon = c_i^s + c_i^p\epsilon$. But this will not increase her payment. Thus, they will still have incentive to bid truthfully.

\begin{theorem}[Individual rationality]
The proposed on-line LPRC auction satisfies individual rationality property.
\end{theorem}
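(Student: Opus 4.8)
The plan is to verify $U_i \geq 0$ by splitting on whether $u_i$ is a winner. For any PU with $u_i \notin \mathcal{S}$, Definition~\ref{definition:pu_utility} gives $U_i = 0$ outright, so the only substantive case is a winning PU, for whom it suffices to show that the payment covers the true cost, i.e. $p_i \geq c_i^s + c_i^p\epsilon$. Invoking the truthfulness established above, I may assume truthful bidding, so that $b_i^s + b_i^p\epsilon = c_i^s + c_i^p\epsilon$; writing $\tilde{b}_i = b_i^s + b_i^p\epsilon$, the goal reduces to proving $p_i \geq \tilde{b}_i$ for every $u_i \in \mathcal{S}$.

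The central fact I would establish first is a trajectory-coincidence lemma: the greedy run of Algorithm~\ref{algorithm:selection} on $\mathcal{U}$ and its run on $\mathcal{U}\setminus\{u_i\}$ make \emph{identical} choices up to the iteration at which $u_i$ is selected. Indeed, let $u_i$ be chosen at iteration $s$ of the run on $\mathcal{U}$. At every earlier iteration the selected PU is not $u_i$, and since the $\argmin$ in line~3 depends only on the currently available PUs and the residual demand vector $\textbf{r'}$, deleting the not-yet-chosen $u_i$ from the candidate pool cannot alter any of these earlier choices. By induction the two runs agree on iterations $1,\ldots,s-1$, so the residual demand at the start of iteration $s$—denote its components $r_j^{(s)}$—coincides in both runs.

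With this in hand, let $u_k$ be the PU selected at iteration $s$ of the run on $\mathcal{U}\setminus\{u_i\}$; then $u_k \in \mathcal{S}'$ and the residual snapshot recorded for it by Algorithm~\ref{algorithm:price} is exactly $r_j^{(s)}$. Because $u_i$ attains the minimum bidding accuracy ratio at iteration $s$ of the original run while $u_k$ is the minimizer once $u_i$ is removed, at the common residual $r_j^{(s)}$ we have
\[
\frac{\tilde{b}_i - q_i/\gamma}{\sum_{j:\tau_j\in\Gamma_i}\min\{r_j^{(s)},1\}}
\;\leq\;
\frac{b_k^s + b_k^p\epsilon - q_k/\gamma}{\sum_{j:\tau_j\in\Gamma_k}\min\{r_j^{(s)},1\}}.
\]
The left denominator is strictly positive, since a selected PU has nonzero marginal coverage. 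The right-hand side is precisely the quantity appearing in the defining equation of the virtual bid $b_{ik}^v$, so cancelling the common positive factor $\sum_{j:\tau_j\in\Gamma_i}\min\{r_j^{(s)},1\}$ yields $\tilde{b}_i \leq b_{ik}^v$. As Algorithm~\ref{algorithm:price} sets $p_i = \max_{k'\in\mathcal{S}'} b_{ik'}^v \geq b_{ik}^v$, it follows that $p_i \geq \tilde{b}_i = c_i^s + c_i^p\epsilon$, hence $U_i \geq 0$.

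The step I expect to demand the most care is the trajectory-coincidence lemma, because it is exactly what permits evaluating both ratios at the single snapshot $r_j^{(s)}$ and thereby matching the denominator in the definition of $b_{ik}^v$; without it the coverage terms in the two runs would be taken at different points and the cancellation would break down. A secondary case to dispose of is the degenerate situation in which $u_i$ is indispensable for some task, so that Algorithm~\ref{algorithm:selection} cannot meet all demands on $\mathcal{U}\setminus\{u_i\}$ and no replacement $u_k$ exists; I would rule this out via the standing feasibility assumption that the remaining PUs can still satisfy every accuracy requirement, under which the iteration-$s$ replacement $u_k$ is guaranteed to exist and the argument above applies verbatim.
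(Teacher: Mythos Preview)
Your proposal is correct and shares the paper's overall skeleton—losers have zero utility by definition, and for winners one shows $p_i \geq c_i^s + c_i^p\epsilon$ after invoking truthfulness—but the paper's own proof is a two-sentence appeal to the critical-payment property: it simply asserts that winners are ``paid exactly the maximum bidding prices that still keep them win,'' hence at least their true cost, without unpacking Algorithm~\ref{algorithm:price}. Your route is more operational: the trajectory-coincidence lemma is a genuine additional ingredient that lets you compare $u_i$'s ratio against its replacement $u_k$ at the \emph{same} residual vector $r_j^{(s)}$, so that the formula for $b_{ik}^v$ in Algorithm~\ref{algorithm:price} is directly seen to dominate $\tilde{b}_i$. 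What this buys you is a self-contained verification that the concrete payment rule actually realizes the critical value, filling a gap the paper leaves implicit; what the paper's terse argument buys is brevity, at the cost of relying on the reader to accept the critical-payment claim at face value. Your handling of the degenerate indispensable-PU case is also a detail the paper does not address.
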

\begin{proof}
For PUs that lose the auction, the utilities are zero according to Definition \ref{definition:pu_utility}. For the winning PUs, they bid the true value $(c_i^s, c_i^p)$ and are paid exactly the maximum bidding prices that still keep them win, guaranteeing that $p_i \geq c_i^s + c_i^p\epsilon$. Thus, $u_i \geq 0,\forall u_i \in \mathcal{U}$, meaning the proposed on-line LPRC auction satisfies individual rationality.
\end{proof}

\begin{theorem}[Computational complexity]
The computational complexity of the proposed on-line LPRC auction is $\mathcal{O}(N^3+N^2M)$.
\end{theorem}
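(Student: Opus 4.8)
The plan is to bound the work performed by one invocation of Algorithm~\ref{algorithm:lyapunov} (that is, the computation in a single time slot) by analysing its two subroutines in turn and then summing their costs. Throughout I write $N = |\mathcal{U}|$ for the number of PUs and $M = |\mathcal{T}|$ for the number of tasks, and I treat each comparison and each evaluation of a ratio's numerator as $\mathcal{O}(1)$, since the numerator $b_i^s + b_i^p\epsilon - q_i/\gamma$ is fixed for each PU throughout a selection. The dominant cost comes from the payment routine, so the real crux is to first establish a sufficiently tight bound on the winner-selection routine; a loose bound there propagates and destroys the claimed estimate.

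First I would analyse Algorithm~\ref{algorithm:selection}. Its \texttt{while} loop removes one PU from $\mathcal{U}$ per iteration, so it runs at most $N$ times. The key observation is that the ratio denominators $\sum_{j:\tau_j\in\Gamma_i}\min\{r'_j,1\}$ need not be recomputed from scratch each iteration: they can be initialised in $\mathcal{O}(NM)$ total time (since $\sum_i|\Gamma_i|\le NM$) and then maintained incrementally. The quantity $\min\{r'_j,1\}$ is nonincreasing in $t$ and changes value only a constant number of times per task, namely when $r'_j$ drops below $1$ and when it reaches $0$; each such change forces a denominator update only for the PUs whose $\Gamma_i$ contains $j$, and the aggregate cost of all updates is bounded by $\mathcal{O}\big(\sum_j|\{i:\tau_j\in\Gamma_i\}|\big) = \mathcal{O}\big(\sum_i|\Gamma_i|\big) = \mathcal{O}(NM)$. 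With the denominators maintained this way, each $\argmin$ costs $\mathcal{O}(N)$, so the $N$ selections cost $\mathcal{O}(N^2)$ in aggregate. Hence Algorithm~\ref{algorithm:selection} runs in $\mathcal{O}(N^2+NM)$.

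Next I would analyse Algorithm~\ref{algorithm:price}. Its outer loop ranges over the winners $u_i\in\mathcal{S}$, of which there are at most $N$. For each winner it re-runs Algorithm~\ref{algorithm:selection} on $\mathcal{U}\setminus\{u_i\}$, costing $\mathcal{O}(N^2+NM)$ by the previous paragraph, and then loops over the at most $N$ users $u_k\in\mathcal{S}'$, computing for each a candidate critical payment whose two ratio sums take $\mathcal{O}(M)$ time and thus contribute $\mathcal{O}(NM)$. The per-winner cost is therefore $\mathcal{O}(N^2+NM)$, and summing over the at most $N$ winners yields $\mathcal{O}(N^3+N^2M)$ for the payment routine.

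Finally I would combine the pieces: in each time slot Algorithm~\ref{algorithm:lyapunov} invokes Algorithm~\ref{algorithm:selection} once ($\mathcal{O}(N^2+NM)$), Algorithm~\ref{algorithm:price} once ($\mathcal{O}(N^3+N^2M)$), and performs the $\mathcal{O}(N)$ queue update, so the payment routine dominates and the per-slot complexity is $\mathcal{O}(N^3+N^2M)$, as claimed. I expect the main obstacle to be exactly the incremental bookkeeping invoked in the second paragraph: a naive implementation that recomputes every denominator in $\mathcal{O}(M)$ time per iteration would make Algorithm~\ref{algorithm:selection} cost $\mathcal{O}(N^2M)$ and push the overall bound to the strictly weaker $\mathcal{O}(N^3M)$. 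Proving that the denominators can be kept current within an amortised $\mathcal{O}(NM)$ budget --- using that each $\min\{r'_j,1\}$ is monotone and changes only $\mathcal{O}(1)$ times per task --- is therefore the step on which the stated bound hinges.
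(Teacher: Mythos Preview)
Your proof is correct and follows the same high-level decomposition as the paper: show that Algorithm~\ref{algorithm:selection} runs in $\mathcal{O}(N^2+NM)$, then observe that Algorithm~\ref{algorithm:price} wraps at most $N$ such calls to obtain $\mathcal{O}(N^3+N^2M)$.

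The difference is one of rigour. The paper's own argument is a two-line sketch: it simply asserts that in each iteration of Algorithm~\ref{algorithm:selection} ``the winner selection and accuracy requirements update are $\mathcal{O}(N)$ and $\mathcal{O}(M)$, respectively,'' and then multiplies by $N$ for the outer loop of Algorithm~\ref{algorithm:price}. It never explains why the $\argmin$ in line~3 costs only $\mathcal{O}(N)$, which tacitly presumes that each ratio denominator $\sum_{j\in\Gamma_i}\min\{r'_j,1\}$ is available in $\mathcal{O}(1)$. Your amortisation argument---initialise the denominators in $\mathcal{O}(NM)$ and maintain them incrementally, using that each $\min\{r'_j,1\}$ is monotone and changes value $\mathcal{O}(1)$ times over the whole run---is exactly what is needed to justify that assumption, and the paper omits it. So your route is not genuinely different; rather, it supplies the missing step that turns the paper's assertion into an actual proof and correctly identifies that a naive recomputation would only yield $\mathcal{O}(N^2M)$ for Algorithm~\ref{algorithm:selection}.
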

\begin{proof}
Algorithm \ref{algorithm:selection} terminates after $N$ iterations in the worst case. In each iteration, the computation complexity of the winner selection and accuracy requirements update are $\mathcal{O}(N)$ and $\mathcal{O}(M)$, respectively. Therefore, the computation complexity of Algorithm \ref{algorithm:selection} is $\mathcal{O}(N^2+NM)$. Furthermore, Algorithm \ref{algorithm:price} have $N$ iterations of outer loop beside Algorithm \ref{algorithm:price}. Thus, the computation complexity of the proposed on-line LPRC auction is $\mathcal{O}(N^3+N^2M)$
\end{proof}

Then, we analyze the approximation bound of the proposed on-line LPRC auction. Notice that the objective function of the LPRC-OTPM problem comprises of two kinds of variables, i.e., binary variable $\textbf{x(t)}$ and continuous variable $\textbf{p(t)}$, making the analysis of the approximation bound difficult. Further, the optimal solution to the LPRC-OTPM problem is not nonnegative so that it is impractical to achieve a multiplicative approximation bound. To tackle the above challenges, we construct the following LPRC on-line total cost minimization (LPRC-OTCM) problem.
\begin{problemotcm}\label{problem:otcm}
\begin{align*}
\min & \ \ \sum\limits_{i: u_i \in \mathcal{U}} [c_i^s + c_i^p\epsilon - \frac{q_i}{\gamma} + m]x_i \\
s.t. & \ \ \sum\limits_{i: u_i \in \mathcal{U}} y_{ij}x_i \geq r_j, \forall \tau_j \in \mathcal{T} \\
     & \ \ x_i \in \{0,1\},
\end{align*}
\end{problemotcm}
where $m = \max\limits_{i: u_i \in \mathcal{U}} \frac{q_i}{\gamma}$.

Notice that the objective function of the LPRC-OTCM problem contains only the binary variables $\textbf{x(t)}$ and the optimal solution is non-negative. It is easy to come up with a multiplicative bound to the LPRC-OTCM problem. Denote the optimal solution to LPRC-OTCM problem as $M^*$, $\theta = max_{i,j:u_i \in \mathcal{U},\tau_j \in \mathcal{T}}y_{ij}|\Gamma_i|$ and $d = \frac{1}{\Delta r}\sum\limits_{j\in\mathcal{T}}r_j$, where $\Delta r$ is the unit measure of elements in $r_j$. According to \cite{jin2015quality}, we have
\begin{align*}
\sum\limits_{i \in \mathcal{S}} (c_i^s + c_i^p\epsilon - \frac{q_i}{\gamma} + m) \leq 2\theta H_d M^*
\end{align*}
where $H_d = 1 + \frac{1}{2} + \cdots + \frac{1}{d}$.

Then, we derive the relationship between the optimal solution $M^*$ and $P^*$ of the LPRC-OTCM problem and the LPRC-OTPM problem in the following lemma.
\begin{lemma}\label{lemma:p2m}
The relationship between $M^*$ and $P^*$ is given by
\begin{align*}
M^* \leq P^* + mn.
\end{align*}
where $n$ is the total number of PUs.
\end{lemma}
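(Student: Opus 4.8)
The plan is to exploit the fact that the LPRC-OTPM and LPRC-OTCM problems are optimized over \emph{identical} feasible sets---both are constrained only by the accuracy requirements $\sum_{i: u_i \in \mathcal{U}} y_{ij} x_i \geq r_j$ together with $x_i \in \{0,1\}$---so that any selection feasible for one problem is automatically feasible for the other. The only difference lies in the objective coefficients, so the strategy is to feed the optimal selection of the OTPM problem into the OTCM problem as a feasible (though not necessarily optimal) point, and then bound the resulting gap by a simple counting argument.

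First I would pin down the OTPM objective in terms of costs. Because individual rationality forces $p_i \geq c_i^s + c_i^p\epsilon$ for every winning PU while the objective is being minimized, the minimizing payment for a selected PU is exactly $p_i = c_i^s + c_i^p\epsilon$. Substituting this and measuring both objectives on the same scale (absorbing the factor $\gamma > 0$), the OTPM value reads $P^* = \min_{x} \sum_{i: u_i \in \mathcal{U}} \left(c_i^s + c_i^p\epsilon - \frac{q_i}{\gamma}\right) x_i$, where the minimum is over the common feasible set. Writing $w_i = c_i^s + c_i^p\epsilon - \frac{q_i}{\gamma}$, the OTCM coefficient is simply $w_i + m$, so the two problems differ only by the uniform additive shift $m = \max_{i} \frac{q_i}{\gamma} \geq 0$ on every included variable.

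Next I would let $x^{P}$ denote an optimal selection achieving $P^*$. Since the feasible sets coincide, $x^{P}$ is feasible for the OTCM problem, and by optimality of $M^*$,
\begin{align*}
M^* \leq \sum_{i: u_i \in \mathcal{U}} (w_i + m)\, x_i^{P} = \sum_{i: u_i \in \mathcal{U}} w_i\, x_i^{P} + m \sum_{i: u_i \in \mathcal{U}} x_i^{P} = P^* + m \sum_{i: u_i \in \mathcal{U}} x_i^{P}.
\end{align*}
The final step is to observe that at most all $n$ PUs can be selected, i.e.\ $\sum_{i: u_i \in \mathcal{U}} x_i^{P} \leq n$ since each $x_i^{P} \in \{0,1\}$, which yields $M^* \leq P^* + mn$ as claimed.

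The main obstacle, I expect, is not the inequality chain itself---which is a one-line feasibility-plus-counting argument---but the preliminary bookkeeping needed to reconcile the payment-based OTPM objective with the cost-based OTCM objective. One must justify via individual rationality that the minimizing payment equals the true cost $c_i^s + c_i^p\epsilon$, and then account correctly for the $\gamma$ normalization so that $M^*$, $P^*$, and $m$ are all expressed on a common scale. Once the two objectives are aligned coefficient by coefficient, the bound follows immediately because the shift $m$ is nonnegative and is applied to at most $n$ selected users.
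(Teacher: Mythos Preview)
Your proposal is correct and follows essentially the same route as the paper: take the optimal selection $x^{*}$ of the OTPM problem, observe that it is feasible for the OTCM problem because the constraint sets coincide, evaluate the OTCM objective at $x^{*}$, and bound the extra $m\sum_i x_i^{*}$ term by $mn$. The only minor difference is that the paper does not argue for the \emph{equality} $p_i^{*} = c_i^s + c_i^p\epsilon$; it uses only the individual-rationality \emph{inequality} $p_i^{*} \geq c_i^s + c_i^p\epsilon$, which already suffices since it gives $\sum_i (c_i^s + c_i^p\epsilon - \tfrac{q_i}{\gamma} + m)x_i^{*} \leq \sum_i (p_i^{*} - \tfrac{q_i}{\gamma} + m)x_i^{*} = P^{*} + m\sum_i x_i^{*}$. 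This is slightly more robust, because if truthfulness is also treated as an inherent constraint the optimal payment need not collapse to the bare cost; but your inequality chain survives unchanged once you weaken your equality to $p_i^{*} \geq c_i^s + c_i^p\epsilon$.
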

The proof can be found in Appendix \ref{appendix:p2m}. Define $\delta = (\max\limits_{i \in \mathcal{U}} (b_{k_i}^s + b_{k_i}^p\epsilon - \frac{q_{k_i}}{\gamma}) / 
(\min\limits_{i \in \mathcal{U}} (b_{k_i}^s + b_{k_i}^p\epsilon - \frac{q_{k_i}}{\gamma} + m))$. The approximation bound of the proposed on-line LPRC auction is given by Theorem \ref{theorem:ratio}.
\begin{theorem}[Approximation Bound]\label{theorem:ratio}
The approximation bound of the proposed on-line LPRC auction is given by
\begin{align*}
P \leq 2\delta \theta dH_d(P^* + mn)
\end{align*}
\end{theorem}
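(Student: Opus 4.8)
The plan is to bound the realized total payment $P$ indirectly, by routing through the auxiliary cost problem LPRC-OTCM rather than attacking the payment directly. The reason is that the two approximation facts already available to me---the greedy set-cover bound $\sum_{i\in\mathcal{S}}(c_i^s+c_i^p\epsilon-\frac{q_i}{\gamma}+m)\leq 2\theta H_d M^*$ inherited from \cite{jin2015quality} and the conversion $M^*\leq P^*+mn$ of Lemma \ref{lemma:p2m}---are both phrased in terms of \emph{cost}, not payment. Hence the only genuinely new link I must forge is a bound of the form $P\leq\delta d\,C(\mathcal{S})$, where $C(\mathcal{S})=\sum_{i\in\mathcal{S}}(c_i^s+c_i^p\epsilon-\frac{q_i}{\gamma}+m)$ is the total (shifted) cost of the winner set $\mathcal{S}$ returned by Algorithm \ref{algorithm:selection}. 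Once that is established, chaining gives $P\leq\delta d\,C(\mathcal{S})\leq\delta d\cdot 2\theta H_d M^*\leq 2\delta\theta dH_d(P^*+mn)$, which is exactly the claimed bound.

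To establish $P\leq\delta d\,C(\mathcal{S})$, I would first invoke the already-proven truthfulness and individual-rationality theorems, so that every PU bids its true cost, $b_i^s=c_i^s$ and $b_i^p=c_i^p$. Writing $v_i=c_i^s+c_i^p\epsilon-\frac{q_i}{\gamma}$ for the virtual cost and $\phi_i=\sum_{j:\tau_j\in\Gamma_i}\min\{r'_j,1\}$ for the instantaneous contribution, Algorithm \ref{algorithm:price} assigns each winner $u_i$ the critical payment $p_i=\max_{k\in\mathcal{S}'}\{\frac{\phi_i}{\phi_k}v_k+\frac{q_i}{\gamma}\}$. The substitute's virtual cost obeys $v_k\leq\max_\ell v_\ell=\delta\min_\ell(v_\ell+m)$ by the definition of $\delta$, while the additive term $\frac{q_i}{\gamma}$ is at most $m$; both can be absorbed into the cost term $v_i+m$ up to the factor $\delta$. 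Summing the resulting per-winner estimates over $\mathcal{S}$, the accumulated contribution ratios $\sum_{i\in\mathcal{S}}\frac{\phi_i}{\phi_k}$ are controlled against the total residual demand, which is where the factor $d=\frac{1}{\Delta r}\sum_{j}r_j$ enters. This gives $P=\sum_{i\in\mathcal{S}}p_i\leq\delta d\sum_{i\in\mathcal{S}}(v_i+m)=\delta d\,C(\mathcal{S})$.

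Finally I would assemble the three inequalities: the set-cover guarantee $C(\mathcal{S})\leq 2\theta H_d M^*$ controls the approximate cost, Lemma \ref{lemma:p2m} converts the optimal cost into the optimal payment via $M^*\leq P^*+mn$, and the payment-to-cost bound just derived completes the chain. I expect the main obstacle to be the second step---relating the realized payments to the winner-set cost. The critical payments are VCG-style substitution prices whose contribution ratios $\frac{\phi_i}{\phi_k}$ shift as the greedy loop of Algorithm \ref{algorithm:selection} peels off residual demand, so the delicate part is to argue rigorously that these ratios, once summed over all winners, amplify the cost by no more than the factor $d$, and that the worst-case virtual-cost ratio is uniformly captured by $\delta$. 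Tracking these two factors simultaneously---rather than the clean $H_d$ accounting that suffices for the cost problem---is where the real work lies; the set-cover and Lemma \ref{lemma:p2m} steps are then direct invocations.
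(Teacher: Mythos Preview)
Your three-link chain $P\le \delta d\,C(\mathcal{S})\le 2\theta\delta d H_d M^*\le 2\theta\delta d H_d(P^*+mn)$ is exactly the route the paper takes, and the second and third links are, as you say, direct invocations. The only substantive difference is in how the first link is executed. Two points are worth noting.

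First, in the paper $P$ denotes the value of the LPRC-OTPM objective achieved by the mechanism, i.e.\ $P=\sum_{i\in\mathcal{S}}\bigl(p_i-\tfrac{q_i}{\gamma}\bigr)$, not $\sum_{i\in\mathcal{S}}p_i$. Consequently the additive $\tfrac{q_i}{\gamma}$ term drops out immediately from the critical-payment expression, and your ``absorb $\tfrac{q_i}{\gamma}\le m$ into $v_i+m$'' step is unnecessary.

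Second, the paper does \emph{not} control $d$ by summing the contribution ratios. Instead it bounds each individual ratio by $\phi_i/\phi_{k_i}\le d$ (so $p_i-\tfrac{q_i}{\gamma}\le d\max_\ell v_\ell$), sums to get $\sum_{i\in\mathcal{S}}(p_i-\tfrac{q_i}{\gamma})\le d\,|\mathcal{S}|\max_\ell v_\ell$, and then converts $|\mathcal{S}|$ into $C(\mathcal{S})$ via the trivial lower bound $C(\mathcal{S})\ge |\mathcal{S}|\min_\ell(v_\ell+m)$. The ratio $\max_\ell v_\ell/\min_\ell(v_\ell+m)$ is precisely $\delta$, yielding $P\le d\delta\,C(\mathcal{S})$. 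Your proposed accounting---summing $\phi_i$ over winners to recover $d$---is more delicate because the denominator $\phi_{k_i}$ varies with $i$ and the $\phi_i$ themselves are evaluated at different residual-demand states in the substitute run $\mathcal{S}'$; the paper's per-term bound followed by the $|\mathcal{S}|\leftrightarrow C(\mathcal{S})$ swap sidesteps that difficulty entirely.
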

where $P$ is the solution achieved by the proposed on-line LPRC auction. The proof can be found in Appendix \ref{appendix:ratio}.

\section{Performance Evaluation}\label{simulation}
In this section, we first introduce two baselines and the simulation settings. Then, we present the simulation results.

\subsection{Simulation Setup}
Firstly, we introduce two baselines to illustrate the effectiveness of the proposed on-line LPRC auction. The first baseline is \emph{static auction} without long-term participation constraints, FC only aim to minimize its total payment at \emph{current slot}. In the static auction, FC chooses the winner users by their accuracy bidding ratio and pay them with their critical payments at each time slot. The other baseline is an LPRC auction with compulsory long-term participation constraints, i.e., compulsory LPRC auction. The compulsory LPRC auction chooses winner users by their accuracy bidding ratio meanwhile guarantees that all PUs will be selected at least once every $\frac{1}{D}$ time slots (we set $D = 0.2$ in the following simulations). The winner users are paid their critical payments.
\begin{table}[!htp]
	\caption{Simulation settings}
	\centering
	\tabcolsep=2pt
	\begin{tabular}[c]{|c|c|c|c|c|c|c|c|}
		\hline \label{table:simulationsettings}
		Setting 	& $\alpha_j$&	$\delta_j$	&	$c_i^s,c_i^p$	&	$|\Gamma_i|$   & $\epsilon$ &	   n 	   &   k   	\\ \hline\hline
		
		I			& $[1,2]$	&  $[0.1,0.2]$  &     $[1,2]$		&	$[5,10]$	   & 	$1$	    &    $100$	   &  $10$	\\ \hline

		II			& $[1,2]$	&  $[0.1,0.2]$  &     $[1,2]$		&	$[5,10]$	   & 	$1$	    & $[100,200]$  &  $10$	\\ \hline
		
		III 		& $[1,2]$	&  $[0.1,0.2]$	&     $[1,2]$		&	$[5,10]$       &  $[0.5,2]$	&    $100$     &  $10$	\\ \hline
	\end{tabular}
\end{table}

Then, we provide the simulation settings in Table \ref{table:simulationsettings}. In setting I, II and III, $\tau_j$'s accuracy requirement $\alpha_j,\delta_j$ and $u_i$'s cost $c_i^s,c_i^p$ take values from the interval shown in Table \ref{table:simulationsettings}, uniformly. $|\Gamma_i|$ stands for the number of tasks that $u_i$ can execute and takes values from the interval $[5,10]$ uniformly. We choose $|\Gamma_i|$ tasks for $u_i$ uniformly from the task set $\mathcal{T}$. In setting I, we fix the PPL $\epsilon$, the number of PUs $n$ and the number of tasks $k$ and show the advantage of the proposed on-line LPRC auction. In setting II, we fix $\epsilon$ and $k$, and demonstrate the impact of the number of PUs on FC's average total payment. In setting III, we fix $n$ and $k$, and illustrate the impact of PUs' PPL $\epsilon$ on FC's average total payment.

\subsection{Simulation Results}
\begin{figure}[ht]
\setlength{\abovecaptionskip}{0pt} 
\setlength{\belowcaptionskip}{-5pt}
\begin{center}
\includegraphics[width=0.3\textwidth]{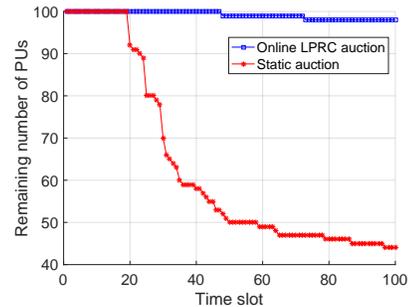}
\end{center}
\caption{The remaining number of PUs Vs. time slot.}
\label{fig:pu_remain}
\end{figure}

\begin{figure}[ht]
\setlength{\abovecaptionskip}{0pt} 
\setlength{\belowcaptionskip}{-5pt}
\begin{center}
\includegraphics[width=0.3\textwidth]{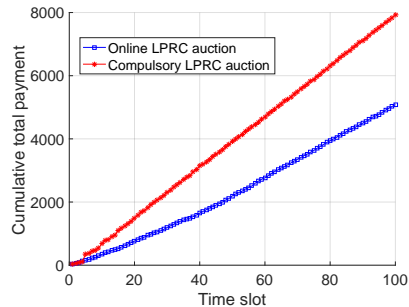}
\end{center}
\caption{FC's total payment Vs. time slot.}
\label{fig:lyapunov2naive}
\end{figure}

\begin{figure}[ht]
\setlength{\abovecaptionskip}{0pt} 
\setlength{\belowcaptionskip}{-5pt}
\begin{center}
\includegraphics[width=0.3\textwidth]{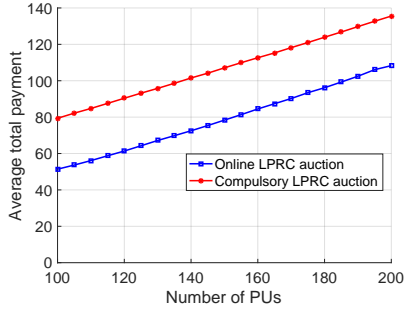}
\end{center}
\caption{Number of PUs Vs. average total payment of FC.}
\label{fig:impact_n_avepayment}
\end{figure}

\begin{figure}[ht]
\setlength{\abovecaptionskip}{0pt} 
\setlength{\belowcaptionskip}{-5pt}
\begin{center}
\includegraphics[width=0.3\textwidth]{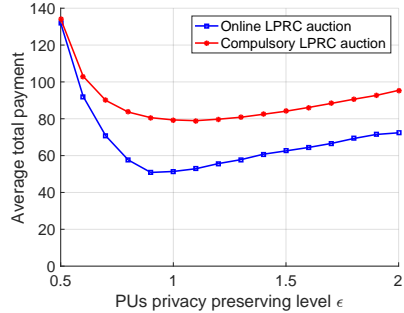}
\end{center}
\caption{PUs' PPLs $\epsilon$ Vs. average total payment of FC.}
\label{fig:impact_epsilon_avepayment}
\end{figure}

Figure \ref{fig:pu_remain} shows the remaining number of PUs at each time slot in setting I. In this simulation, we assume that a PU would drop out of the system if she was not selected for $20$ consecutive time slots. Since in this situation, PUs in the compulsory LPRC auction will not leave the system, we only compare the on-line LPRC auction with the static auction in this simulation. We observe that, in the on-line LPRC auction, the number of PUs remains almost unchanged. However, the remaining number of PUs drops dramatically after $20$ time slots in the static auction. After $100$ time slots, less than half of PUs still stay in the system. The simulation result shows that the proposed on-line LPRC auction can efficiently maintain the long-term participation of PUs.

In Figure \ref{fig:lyapunov2naive}, we compare the cumulative total payment of the on-line LPRC auction with the compulsory LPRC auction at each time slot in setting I. Figure \ref{fig:lyapunov2naive} shows that the proposed on-line LPRC auction can efficiently reduce FC's cumulative total payment. The reason is that, the proposed on-line LPRC auction has jointly optimized FC's payment and PUs' long-term participation requirement, whereas the compulsory LPRC auction strictly guarantees PUs' long-term participation requirements and ignores the optimality of FC's payment.

Figure \ref{fig:impact_n_avepayment} shows the impact of the number of PUs on FC's average total payment in setting II. We observe that, the average total payment increases when the number of PUs increases. Because more PUs means more virtual sensing request queues to stabilize, affecting the optimality of FC's average total payment.

In Figure \ref{fig:impact_epsilon_avepayment}, we show the impact of PUs' PPL $\epsilon$ on FC's average total payment in setting III. Figure \ref{fig:impact_epsilon_avepayment} shows that when $\epsilon$ is small, the average total payment decreases when $\epsilon$ increases. Since larger $\epsilon$ means higher data quality, FC can recruit less PUs to execute each task, leading to the decrease of the average total payment. However, when $\epsilon$ continues to increase, PUs' privacy loss increases dramatically, resulting in the increase of the payment to a single PU. As a result, the average total payment increases. The simulation result demonstrates that, there exists an optimal $\epsilon$ to minimize FC's average total payment, FC can determine the optimal $\epsilon$ with numerical calculation as this simulation.

\section{Conclusion}\label{conclusion}
In this paper, we designed a long-term privacy-preserving incentive mechanism LEPA for real-time data aggregation. LEPA takes both long-term participation constraints and PUs' privacy into consideration, which are essential to incentivize participation in real-time data aggregation. Due to PUs' strategic behavior and combinatorial nature of the tasks, we proposed a computationally efficient mechanism with near-to-optimal performance to jointly optimize FC's payment and PUs' participation. Furthermore, we ensure that the proposed on-line auction satisfies other desirable properties, including truthfulness and individual rationality. The effectiveness of the proposed on-line auction is validated by both theoretical analysis and extensive simulations.

\bibliographystyle{IEEEtran}
\bibliography{cited}

\appendix
\section{Proof of Lemma 1}\label{appendix:relationship} 
\begin{proof}
We utilize $\hat{s}_j$ to denote the privacy-preserving aggregation of task $\tau_j$. Thus, the aggregation error between privacy-preserving data and real sensing data is given by
\begin{align*}
\hat{s}_j - s_j &= \frac{1}{|\mathcal{S}|}\sum_{i: u_i \in \mathcal{S}}(d_i+\eta_i) 
- \frac{1}{|\mathcal{S}|}\sum_{i: u_i \in \mathcal{S}}d_i
&=\frac{1}{|\mathcal{S}|}\sum_{i: u_i \in \mathcal{S}}\eta_i.
\end{align*}
where $|\mathcal{S}|$ is the cardinality of winner user set $\mathcal{S}$, $d_i$ and $\eta_i$ are the real sensing data and noise added to $d_i$, respectively.

Recall that the variance of the Laplacian random variable $\eta_i\sim Lap(0,a_i)$ is $2a_i^2$, i.e., $D(\eta_i)=2a_i^2$. Thus, for independent Laplacian variables, we have
\begin{align*}
D(\frac{1}{|\mathcal{S}|}\sum_{i: u_i \in \mathcal{S}}\eta_i)=\frac{2}{|\mathcal{S}|^2}\sum_{i: u_i \in \mathcal{S}}a_i^2.
\end{align*}

From the \emph{Chebyshev's inequality}, we derive that
\begin{align*}
\mathbb{P}[|\hat{s}_j - s_j| \geq \alpha_j] \leq \frac{2}{\alpha_j^2 |\mathcal{S}|^2}\sum_{i: u_i \in \mathcal{S}} a_i^2.
\end{align*}

To achieve $(\alpha_j, \delta_j)$-accuracy for task $\tau_j$, we should guarantee that
\begin{align*}
\frac{2}{\alpha_j^2 |\mathcal{S}|^2}\sum_{i: u_i \in \mathcal{S}} a_i^2 \leq \delta_j.
\end{align*}

Therefore, in time slot $t$, substituting $b_i=\frac{\gamma}{\epsilon_i}$ into the above formula and do some algebra, we derive Lemma \ref{lemma:relationship}.
\end{proof}

\section{Proof of Lemma 6}\label{appendix:p2m}
\begin{proof}
Assume $(x^*, p^*)$ is the optimal solution to the LPRC-OTPM problem, it is obvious that $P^* = \sum_{i \in \mathcal{U}} (p_i^* - \frac{q_i}{\gamma})$. Further, since the proposed mechanism is truthful and individual rational, we have $p^* \geq c_i^s + c_i^p\epsilon$.

Notice that the constraints of the LPRC-OTPM problem and the LPRC-OTCM problem are the same, $(x^*, p^*)$ is also feasible to the LPRC-OTCM problem. Thus
\begin{align*}
M^* & \leq \sum\limits_{i \in \mathcal{U}} (c_i^s + c_i^p\epsilon - \frac{q_i}{\gamma} + m)x_i^* \\
	& \leq \sum\limits_{i \in \mathcal{U}} (p_i^* - \frac{q_i}{\gamma} + m)x_i^* \\
	& = P^* + \sum\limits_{i \in \mathcal{U}} mx_i^* 
	 \leq P^* + mn
\end{align*}
which concludes the proof.
\end{proof}

\section{Proof of Theorem 7}\label{appendix:ratio}
\begin{proof}
Notice that
\begin{align*}
&\sum\limits_{i \in \mathcal{S}} (c_i^s + c_i^p\epsilon - \frac{q_i}{\gamma} + m) 
\geq |\mathcal{S}| \min\limits_{i \in \mathcal{U}} \{c_i^s + c_i^p\epsilon - \frac{q_i}{\gamma} + m\}
\end{align*}

From Algorithm \ref{algorithm:price}, we know that for every $u_i$, there exists an $u_{k_i}$ such that
\begin{align*}
p_i = (b_{k_i}^s + b_{k_i}^p\epsilon - \frac{q_{k_i}}{\gamma})
\frac{\sum\limits_{j:\tau_j \in \Gamma_i} \min\{r'_j, 1\}}{\sum\limits_{j:\tau_j \in \Gamma_{k_i}} \min\{r'_j, 1\}} + \frac{q_i}{\gamma}
\end{align*}

Therefore, we have
\begin{align*}
 \sum\limits_{i \in \mathcal{S}} p_i - \frac{q_i}{\gamma} 
= 	 & (b_{k_i}^s + b_{k_i}^p\epsilon - \frac{q_{k_i}}{\gamma})
\frac{\sum\limits_{j:\tau_j \in \Gamma_i} \min\{r'_j, 1\}}{\sum\limits_{j:\tau_j \in \Gamma_{k_i}} \min\{r'_j, 1\}} \\
\leq & d|\mathcal{S}|\max\limits_{i \in \mathcal{U}} (b_{k_i}^s + b_{k_i}^p\epsilon - \frac{q_{k_i}}{\gamma}) \\
\leq & d\frac{\max\limits_{i \in \mathcal{U}} (b_{k_i}^s + b_{k_i}^p\epsilon - \frac{q_{k_i}}{\gamma})}
{\min\limits_{i \in \mathcal{U}} (b_{k_i}^s + b_{k_i}^p\epsilon - \frac{q_{k_i}}{\gamma} + m)}
\sum\limits_{i \in \mathcal{S}} (c_i^s + c_i^p\epsilon - \frac{q_i}{\gamma} + m) \\
=    & d\delta \sum\limits_{i \in \mathcal{S}} (c_i^s + c_i^p\epsilon - \frac{q_i}{\gamma} + m) \\
\leq & 2\theta\delta dH_d M^* \\
\leq & 2\theta\delta dH_d(P^* + mn)
\end{align*}
which concludes the proof.
\end{proof}

\end{document}